\documentclass[a4paper,UKenglish,cleveref, autoref, thm-restate]{lipics-v2021}

\usepackage{array}
\usepackage{thm-restate}
\usepackage{booktabs}
\usepackage[table,dvipsnames]{xcolor}

\nolinenumbers

\title{Reconfiguration of Squares Using a Constant Number of Moves Each}

\author{Thijs van der Horst}{TU Eindhoven and Utrecht University, the Netherlands}{t.w.j.v.d.horst@uu.nl}{}{}
\author{Maarten Löffler}{Utrecht University, the Netherlands}{m.loffler@uu.nl}{}{}
\author{Tim Ophelders}{TU Eindhoven and Utrecht University, the Netherlands}{t.a.e.ophelders@uu.nl}{}{}
\author{Tom Peters}{TU Eindhoven, the Netherlands}{t.peters1@tue.nl}{}{}

\authorrunning{T. van der Horst, M. Löffler, T. Ophelders and T. Peters} %TODO mandatory. First: Use abbreviated first/middle names. Second (only in severe cases): Use first author plus 'et al.'

\keywords{monotone reconfiguration, multi-robot motion planning}

\relatedversion{} %optional, e.g. full version hosted on arXiv, HAL, or other respository/website

\begin{document}

\maketitle
\begin{abstract}
    Multi-robot motion planning is a hard problem. We investigate restricted variants of the problem where square robots are allowed to slide over an arbitrary curve to a new position only a constant number of times each. We show that the problem remains NP-hard in most cases, except when the squares have unit size and when the problem is unlabeled, i.e., the location of each square in the target configuration is left unspecified.
\end{abstract}

\section{Introduction}
Multi-robot motion planning concerns problems where we are given a starting configuration of robots in the plane, as well as a desired target configuration. The goal is to move the robots from their starting location to the target configuration while avoiding collisions. Many different variants of these problems have been considered. Robots can have different sizes and shapes, they might have to avoid static obstacles in their workspace, and each robot can either have its own dedicated target, or robots can end up at arbitrary targets.

Except for some trivial cases, solving these problems optimally, minimizing either the makespan or path length of a solution, is hard.
In general, it is usually even hard to decide if a target configuration can be reached at all.
For a full survey of these results, see~\cite{dumitrescu2007motion}. Here, we give a brief overview of the known hardness results.

One of the first hardness results is from 1984, when it was shown that it is PSPACE-complete to decide if a set of arbitrary shaped objects can be reconfigured into a designated target configuration~\cite{hopcroft1984complexity}, this was then shown to already hold for rectangular objects moving in a rectangular region~\cite{hopcroft1986reducing}. The proof was later generalized, and used to show that rectangles of size $1\times 2$ and $2\times 1$ suffice~\cite{hearn2009games}. The well-known related \emph{Rush Hour Problem}, where rectangles are only allowed to slide into one direction, was also proven to be PSPACE-complete~\cite{flake2002rush}.
For unit squares in a simple polygon, deciding if a target configuration is reached is also PSPACE-hard, both if the polygon can have holes~\cite{brunner2021rushhour} as well as when it cannot~\cite{abrahamsen2024reconfiguration}.
For many moving disks it has been shown to be NP-hard to decide if a target configuration is reachable~\cite{spirakis1984strong}, even if the disks have the same size~\cite{dumitrescu2013reconfiguration}.

On the other hand, for two disks, a full characterization has been made of reconfiguration using minimum path length~\cite{kirkpatrick2016characterizing}, which has recently been extended to two arbitrary centrally symmetric objects~\cite{kirkpatrick2025minimum}. Moreover, if the objects are pebbles moving on a graph, which corresponds to squares moving in a grid aligned polygon, a polynomial time algorithm is known to calculate the reconfiguration using the least amount of moves if one exists~\cite{kornhauser1984coordinating}.

Most reconfiguration problems of this type are either hard, or a trivial algorithm exists. In an attempt to reduce the search space, \emph{monotone} movement schedules have been proposed: movement schedules where each robot is allowed to move only once. This open problem was recently stated by Abrahamsen and Halperin~\cite{abrahamsen2024ten}. If you furthermore only allow a single robot to move at each point in time, this reduces the problem to finding a valid permutation, hopefully reducing the complexity of the problem. In this paper, we investigate the multi-robot motion planning problem for square robots that are allowed to move at most a constant number of times. We show that in most cases, the decision problem remains NP-hard, except for some scenarios with unit size robots in which we can formulate the problem as a flow computation, and some trivial scenarios in which the robots are able to move arbitrarily far away and back again.

\section{Preliminairies}
\label{sec:definitions}
The \emph{multi-robot motion planning} problem takes a set of robots (or objects) in the Euclidean plane and tries to move them around to create a new configuration, while avoiding collisions. Let a \emph{configuration} be a set of non-overlapping squares in the Euclidean plane. Let a \emph{move} be a square sliding on an arbitrary curve in one continuous motion without rotating. A move is \emph{valid} if it does not collide (overlap) with any other square during the movement. Only a single square is allowed to move at any given time. A \emph{motion schedule} is a sequence of moves. A \emph{valid motion schedule} is a motion schedule where each move is valid. In the \emph{square reconfiguration} problem, we are given a starting configuration $S$, and a target configuration $T$ with $|S| = |T|$. The goal is to find any valid motion schedule that reconfigures $S$ into $T$, or decide that such a schedule does not exist.

We consider several variants of this problem, which we call \emph{scenarios}. In the \emph{labeled} scenarios, each square in $S$ has a dedicated target in $T$. In the \emph{unlabeled} scenarios, each square is allowed to end up at any target in $T$. In the \emph{bounded} scenarios the squares lie in the interior of a rectilinear polygonal domain with integer coordinates (potentially with holes). A move is only valid if, besides not intersecting other squares, it does not intersect the boundary of this domain at any time during a move. In the \emph{unbounded} scenarios, this polygonal domain does not exist and the squares move freely in plane (only colliding with other squares).

We also distinguish between the sizes of the squares. The squares can have a side length of $1$ unit, $2$ units, or more. In the literature, some papers consider the squares to always have unit side length, and instead allow the squares and the polygonal domain to use non-integer coordinates. These definitions are equivalent.
Lastly, we consider the number of moves that a square is allowed to perform. We only consider scenarios where each square is allowed to move at most $x$ times, where $x$ is a constant independent of the number of squares. If each square is allowed to perform only a single move, the problem and resulting movement schedule is also called \emph{monotonic}.

Note that in some scenarios the problem is trivial: in the \emph{unbounded} scenarios in which the number of allowed moves is at least $2$, we can simply move all robots arbitrarily far away one by one until our workspace is empty, and then move them back in any different order.

\paragraph*{Contributions}
In Section~\ref{sec:labeled-monotone}, we show that the labeled monotonic scenario is NP-hard. In Section~\ref{sec:1x1}, we give a polynomial time algorithm for the scenario in which the squares are unlabeled and have unit length. In Section~\ref{sec:unlabeled-monotonic}, we show how for squares that are bigger than unit length, the unlabeled monotonic scenario is also NP-hard. Lastly, in Section~\ref{sec:bounded-1x1-multiple-moves}, we show that the remaining scenario, of bounded squares of unit length allowed more than 1 move, is also NP-hard.
Hence, for each scenario, we show that the problem is either NP-hard, or, in the case of unlabeled unit-sized squares, we give a polynomial time algorithm to compute a solution. Our results can be found in Table~\ref{tab:results}.

\begin{table}[t]
\definecolor{mooirood}{HTML}{dd3333}
\definecolor{mooigroen}{HTML}{88eeaa}
\DeclareRobustCommand{\np}{\cellcolor{mooirood}\color{white}}
\DeclareRobustCommand{\p}{\cellcolor{mooigroen}\color{black}}
\begin{subtable}{0.5\linewidth}
\begin{tabular}{c|c|c|c|}
    & \multicolumn{3}{c|}{\# moves}
    \\
    size & 1 & 2 & $k$\\\hline
    $1\times 1$ & \np Thm.~\ref{thm:manta} & \np Thm.~\ref{thm:Hamiltonian-rendez} & \np Thm.~\ref{thm:Hamiltonian-rendez} \\\hline
    $2\times 2$ & \np Thm.~\ref{thm:manta} & \np Thm.~\ref{thm:rendez-2move} & \np Thm.~\ref{thm:rendez-latch} \\\hline
    $\ell\times \ell$ & \np Thm.~\ref{thm:manta} & \np Thm.~\ref{thm:rendez-2move} & \np Thm.~\ref{thm:rendez-latch} \\\hline
\end{tabular}
\subcaption{labeled, bounded}
\end{subtable}%
\qquad
\begin{subtable}{0.5\linewidth}
\begin{tabular}{c|c|c|c|}
    & \multicolumn{3}{c|}{\# moves}
    \\
    size & 1 & 2 & $k$\\\hline
    $1\times 1$ & \np Thm.~\ref{thm:manta} & \p trivial & \p trivial \\\hline
    $2\times 2$ & \np Thm.~\ref{thm:manta} & \p trivial & \p trivial \\\hline
    $\ell\times \ell$ & \np Thm.~\ref{thm:manta} & \p trivial & \p trivial \\\hline
\end{tabular}
\subcaption{labeled, unbounded}
\end{subtable}\\
\begin{subtable}{0.5\linewidth}
\begin{tabular}{c|c|c|c|}
    & \multicolumn{3}{c|}{\# moves}
    \\
    size & 1 & 2 & $k$\\\hline
    $1\times 1$ & \p Thm.~\ref{thm:flow} & \p Thm.~\ref{thm:flow} & \p Thm.~\ref{thm:flow} \\\hline
    $2\times 2$ & \np Thm.~\ref{thm:Hamiltonian} & \np Thm.~\ref{thm:rendez-2move} & \np Thm.~\ref{thm:rendez-latch} \\\hline
    $\ell\times \ell$ & \np Thm.~\ref{thm:Hamiltonian} & \np Thm.~\ref{thm:rendez-2move} & \np Thm.~\ref{thm:rendez-latch} \\\hline
\end{tabular}
\subcaption{unlabeled, bounded}
\end{subtable}%
\qquad
\begin{subtable}{0.5\linewidth}
\begin{tabular}{c|c|c|c|}
    & \multicolumn{3}{c|}{\# moves}
    \\
    size & 1 & 2 & $k$\\\hline
    $1\times 1$ & \p Thm.~\ref{thm:flow} & \p trivial & \p trivial \\\hline
    $2\times 2$ & \np Thm.~\ref{thm:Hamiltonian} & \p trivial & \p trivial \\\hline
    $\ell\times \ell$ & \np Thm.~\ref{thm:Hamiltonian} & \p trivial & \p trivial \\\hline
\end{tabular}
\subcaption{unlabeled, unbounded}
\end{subtable}
\caption{An overview of all scenarios and our corresponding results. On the horizontal axis is the number of moves per square (where $k$ is an arbitrary constant). On the vertical axis is the size of the squares (where $\ell$ is an arbitrary constant). \colorbox{mooigroen}{\p Green cells} allow a polynomial time solution. \colorbox{mooirood}{\np Red cells} are NP-hard.}
\label{tab:results}
\end{table}

\section{Labeled Monotonic Reconfiguration}\label{sec:labeled-monotone}

In the scenario with labeled squares, where each square is allowed only a single move (monotonic), square reconfiguration is NP-hard. We show this via a reduction from \textsc{Planar Monotone 3-SAT}, which is NP-hard~\cite{deberg2010optimal}. We will show the reduction for the bounded case. For the unbounded case, note that in any valid reconfiguration schedule, a square needs to move to its target in one move. Hence, if a square already is on its target location, it can never move. We can use this to replace the boundary. The reduction works for squares of arbitrary size.

For the variable gadget, we place $5$ squares representing true, and above that $5$ squares representing false in a horizontal hallway of height $2$, see Figure~\ref{fig:manta}. The labeling is such that the ``true'' squares should end up in the same order from left to right as they started, and similar for the ``false'' squares.
There are $3$ one-wide hallways above the false squares, and $3$ one-wide hallways below the true squares. Note that these hallways are only accessible if some of the corresponding false or true squares move to their target.

A clause gadget is simply a T-junction containing a target. All three arms of the T are connected to the corresponding variable gadgets on the ``true'' or ``false'' side.
The variable gadgets are connected horizontally with a start gadget on the left and a target gadget on the right.
The start gadget contains a single \emph{checker} square (yellow in Figure~\ref{fig:manta}). This square will only be able to move to its target once the variables have an assignment.
The end gadget contains $k$ starting \emph{clause} squares (purple), where $k$ is the number of clauses in the construction. The targets of these clause squares are the targets inside each clause gadget.
Behind these clause squares there is a single target to which the checker square needs to move.
Now the yellow checker square can only reach its target if and only if there is an assignment to the variables that satisfies the formula.

\begin{figure}[t]
    \centering
    \includegraphics{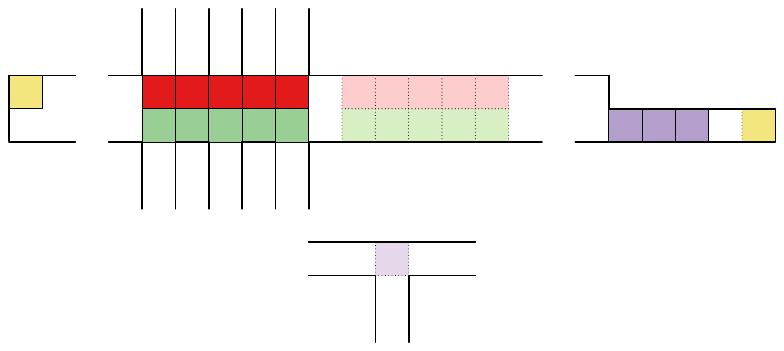}
    \caption{The gadgets for Theorem~\ref{thm:manta}. From left to right: The start gadget, the variable gadget, and the end gadget. Bottom: clause gadget. The green squares represent true, the red ones false. The dotted squares are targets.}
    \label{fig:manta}
\end{figure}

\begin{theorem}\label{thm:manta}
    The labeled, monotonic square reconfiguration problem is NP-hard.
\end{theorem}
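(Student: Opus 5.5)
We prove Theorem~\ref{thm:manta} by the reduction from \textsc{Planar Monotone 3-SAT} described above. Planarity and monotonicity give a rectilinear embedding in which variables lie on a horizontal line. Positive clauses are nested above the line and negative clauses below it. We then mirror one side so that each clause gadget (a T-junction) connects by one-wide hallways to the ``true'' side or the ``false'' side of its three variable gadgets. These hallways can be routed without crossings and with polynomial total complexity. Three hallways per side suffice when each variable occurs at most three times per sign; otherwise we use more squares and hallways per side. Thus the instance has polynomial size, and it remains to prove correctness in both directions.

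\emph{Satisfiable $\Rightarrow$ schedule exists.} Fix a satisfying assignment. For each variable set to true, we first move the ``false'' squares to their targets, emptying the false side. The true squares can then be shifted within the height-$2$ hallway so that one lane of the corridor is free. The geometry has to be designed so that these moves are single moves ending at the labelled targets, with order preserved, so no square passes another. In each clause pick a satisfied literal. The corresponding clause square leaves the end gadget, travels through the now-free lane of the variable corridors, goes up the hallway of the satisfied literal, and reaches its target in the T-junction. Once all clause squares are gone, the checker travels along the free lanes to its target behind them. Finally the remaining variable squares move to their targets. We verify that each square moves exactly once and that each move avoids collisions.

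\emph{Schedule exists $\Rightarrow$ satisfiable.} This direction is the main obstacle. Since each square moves once and must land on its target, a square sitting on a target it is not assigned to blocks forever until its single move. First, the checker can only reach its target after every clause square has left the end gadget, and clause squares can only leave through the variable corridors. Second, we argue that within a variable gadget the two sides cannot both become passable. Opening a side's hallway requires those squares to complete their single move. The targets must be placed so that once the true squares have moved to their final positions they block the corridor lane that the false hallways need, and vice versa; otherwise the ordered labelling would allow both. I would make this precise by a careful case analysis of the corridor of height $2$: at any time, a passage from the horizontal corridor into a top hallway and into a bottom hallway cannot both be free before the checker has passed. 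Defining the variable to be true iff a true-side hallway is ever entered then gives a consistent assignment. Each clause square reached its T-junction through some hallway, so every clause is satisfied.

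For the unbounded case we replace the boundary by squares already on their targets. Such squares can never move, since leaving and returning would take two moves, so they act as walls. The boundary is a rectilinear polygon with integer coordinates, so it can be tiled by squares of the given size after scaling. This holds for any square size, which completes the proof.
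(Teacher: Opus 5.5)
Your overall plan is the paper's. You use the same reduction from \textsc{Planar Monotone 3-SAT}, the same gadgets, the same argument that the checker must wait until every clause square has left, and the same replacement of the boundary by squares already on their targets.

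The genuine gap is the schedule in the satisfiable direction. For a variable set to true you first move the false squares to their targets and then the true squares. So both groups of that variable are on their targets before any clause square or the checker travels. The converse, however, rests on the opposite property of the variable gadget. The paper states it as ``only the true, or only the false squares can move before the checker square, since otherwise it cannot pass''. You state it yourself: a group on its final positions blocks the lane the other side's hallways need, ``and vice versa''.

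Under that property your schedule fails at its first step. The false squares, once on their targets, never move again and cut off the true-side hallways. In the paper's gadget, moving the true squares as well closes the corridor, so the checker can never get through. You cannot escape by redesigning the gadget so that both groups sit on their targets while a lane and the true-side hallways stay open. In that state both sides of one variable are usable before the checker passes, and the reduction would no longer be sound.

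The repair is the paper's schedule. For a variable set to true, move \emph{only} the true squares; this vacates the cells where the true-side hallways attach. Then route the clause squares, let the checker pass, and only then move the false squares. These are the ``remaining variable squares'' of your final step, which your version leaves with nothing to do.

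The converse also needs two adjustments.
\begin{itemize}
\item Define the assignment by which group has moved before the checker moves, not by which hallway is ``ever entered''. After the checker passes, the second group still moves and may cross any hallway.
\item Your claim that the two passages are never free \emph{simultaneously} is too weak on its own, since the sides could be used one after the other. You need the irreversible statement: a side's hallway is usable only after that group has made its single move, and once both groups have moved the checker is blocked forever.
\end{itemize}
That single invariant of the variable gadget is what must be checked against the construction, rather than deferred to an unspecified case analysis.
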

\begin{proof}
    We will first show that the instance $M$ of the bounded, labeled, monotonic square reconfiguration problem with $1\times 1$ squares admits a valid movement schedule if and only if the corresponding \textsc{Planar Monotone 3-SAT} instance $I$ is satisfiable.

    If $I$ is satisfiable, we take for each variable its assignment. We move for each variable the corresponding true or false squares to their respective targets. Then, because the instance is satisfiable, each clause square has a path to its corresponding target. Now the central hallway is open and the checker square can move to its target. Lastly, the remaining variable squares move to their targets. This movement schedule is valid.

    Now assume that $I$ is not satisfiable. We will show that $M$ does not admit a valid movement schedule. For the checker square to move to its target, all the clause squares have to move first. Hence, the checker square can only ever move via the central hallway, since the clause gadgets will be blocked. Furthermore, for each variable, only the true, or only the false squares can move before the checker square, since otherwise it cannot pass. Hence, each variable needs an assignment. If $I$ is not satisfiable, there exists no assignment of the variables such that each clause square can reach its target. Hence, the checker square can never reach its own target and a valid movement schedule does not exist.

    To show that the construction works for unbounded instances, we notice that a boundary can be replaced by squares that are already on their target. These squares can never move. We also observe that the construction holds regardless of the size of the squares.
\end{proof}

\section{Unlabeled \texorpdfstring{$1\times 1$}{1x1}}\label{sec:1x1}
In the scenario with unlabeled $1\times 1$ squares, the key insight is that whenever a square $a$ wants to move to a target $a'$ via a path $\pi$, but it is blocked by another square $b$, we can take the last such $b$ and move it to $a'$ instead. We can use this to create an algorithm that finds a valid movement schedule if it exists. It creates a graph of all possible square positions and creates a flow-graph out of it, having flow from the starting locations to the target locations. This flow is then used to create a movement schedule.

Because our squares are $1\times 1$, we consider only integer coordinates. Consider the set $C$ of all integer coordinates that are inside the polygonal domain (if bounded) or inside the axis aligned bounding box of $S$ and $T$ (if unbounded). Let $G$ be the grid graph where the vertices correspond to $C$ and there is an edge between two vertices $u$ and $v$ if their corresponding integer coordinates are horizontally or vertically adjacent.
If $G$ contains more than one connected component, these components can be solved independently from each other and the full problem admits a valid movement schedule if and only if each individual component admits a valid movement schedule.
Hence, from this point, we assume that $G$ consists of a single connected component.
We will show that in this case, there is always a solution.

To do so, we make a flow graph $G_F$ out of $G$ as follows. We add a single source and a single target. The source is connected with a directed edge with capacity $1$ to each vertex that represents a coordinate in $S$ and the target has a directed edge with capacity $1$ from each vertex representing a coordinate in $T$. All edges in $G$ are replaced by directed edges in $G_F$ with a capacity of $n$.

We compute a maximal flow~$F$ on $G_F$~\cite{goldberg1989network}. Because the singular source and singular target only have edges with capacity of $1$, this maximal flow contains only integer values for each edge. The maximal flow $F$ might contain cycles. If $F$ contains a positive weight directed cycle, we reduce the flow value on each edge in the cycle by $1$. We repeat this until $F$ contains no cycles. When removing the singular source and singular sink from $F$, what remains is a Directed Acyclic Graph (DAG) $F'$ with positive edge weights. This DAG $F'$ represents the flow of squares from $S$ to $T$ over the grid cells.

To construct a movement schedule out of $F'$, we take any arbitrary sink $t$ in $F'$, i.e. a node with out-degree zero. Sink $t$ represents a target square in $T$. We now find a path in $F'$ from any vertex $s$ representing a start in $S$ to $t$, such that this path does not visit any other vertex in $S$. We reduce the weight on all edges on this path by one and move $s$ to $t$ via the path. We repeat this until no such sinks $t$ remain.

\begin{theorem}\label{thm:flow}
    The unlabeled, square reconfiguration problem for $1\times 1$ squares has a polynomial time algorithm.
\end{theorem}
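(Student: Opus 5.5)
We will show that the algorithm described above is correct and runs in polynomial time, and that it uses only a single move per square. Correctness then covers every column of the unlabeled $1\times 1$ row in Table~\ref{tab:results}. As argued above, the components of $G$ can be handled independently. We therefore assume $G$ is connected and $|S\cap V|=|T\cap V|$ for each component; otherwise the answer is trivially ``no'', since squares cannot leave their component. Under this assumption we must show three things: (i) the maximum flow saturates every source and sink edge; (ii) the extraction procedure always finds a valid path while unsatisfied sinks remain; (iii) each extracted move is collision-free at the moment it is executed.

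\textbf{Step (i).} The internal edges have capacity $n$, and for every $s\in S$, $t\in T$ there is a path in the connected graph $G$. Hence any cut separating source and sink must either cut all source edges, cut all sink edges, or cut an internal edge of capacity $n$, so the minimum cut is at least $n$. Integrality of max flow then gives an integral flow of value $n$. Cancelling directed cycles keeps the flow value and integrality and strictly decreases total flow, so it terminates. The result $F'$ is acyclic. In $F'$, every vertex of $T$ absorbs exactly one unit toward the sink, and every vertex of $S$ emits one unit from the source. A vertex in $S\cap T$ can be treated as a zero-length path.

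\textbf{Step (ii)--(iii), the main obstacle.} Pick a vertex $t\in T$ that is currently empty and a sink of $F'$ with respect to the internal edges. Such a vertex exists while flow remains, by acyclicity. Walk backwards from $t$ along edges of positive flow. By conservation, the walk eventually reaches a vertex of $S$ that still holds an unmoved square. Stop at the \emph{first} occupied vertex $s$ encountered. This is the ``last blocker'' idea: the backward path from $t$ to $s$ contains no other occupied vertex. Here we need the invariant that all occupied cells are start cells whose square has not yet moved, or filled targets. Filled targets are sinks that carry no further outgoing flow, so they are never traversed, because we process sinks first.

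The delicate part is the flow accounting when $s$ is reached earlier than the ``intended'' source. Rerouting the unit of flow so that it starts at $s$ rather than at a further source $s''$ could leave flow at $s$ unbalanced. We will handle this by decrementing the flow along the path $s\to t$ only, and by proving the invariant that the remaining flow is a valid flow from the unmoved squares to the empty targets. The unit of flow previously passing through $s$ toward $t$ is exactly the one we removed. The source edge of $s$ now effectively feeds the continuation that $s$ originally passed on. To make this clean, we can redefine the residual instance with $s''$ taking $s$'s role, and argue via path decomposition of $F'$ that the path decomposition remains consistent. Each square moves once, since after moving it sits on a target that is a sink and is never revisited.

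\textbf{Running time.} $|C|$ is polynomial in the input when coordinates are encoded in unary or polynomially bounded. Otherwise we compress the empty regions, which we will mention as a remark. Max flow is polynomial~\cite{goldberg1989network}, there are at most polynomially many cycle cancellations, and $n$ path extractions each take linear time.
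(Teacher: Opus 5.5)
Your proposal is correct and follows the paper's proof: max flow on the grid graph, cycle cancellation to obtain a DAG, and repeatedly moving the last occupied start on a backward flow path from an empty sink target into that target. The step you call ``delicate'' is in fact a one-line conservation check (the extraction removes one unit of out-flow at $s$, which loses its square, and one unit of in-flow at $t$, which becomes filled), and you are otherwise more careful than the paper, whose proof tacitly assumes the maximum flow saturates every source and sink edge (your min-cut argument supplies this) and, just as your unsupported ``compress the empty regions'' remark does, simply treats the grid size $|C|$ as polynomial.
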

\begin{proof}
    To show that our algorithm provides a correct solution, we need to show that there always exists at least one sink $t$ in $F'$, and that there always exists a corresponding path from a source $s$ to $t$ that does not visit any other vertex in $S$.

    First we show that $F'$ stays a DAG with a single connected component throughout the algorithm. $F'$ is a DAG at the start of the algorithm, since it is constructed from a flow network that had its cycles removed. At this point, $F'$ is a valid flow network, i.e. for every vertex that is not in $S$ or $T$, the inflow is equal to the outflow. Whenever we remove flow from the edges that make up a path from $s$ to $t$, we remove one inflow and one outflow from every vertex on its path (only one outflow from $s$ and one inflow from $t$). Hence, it remains a valid flow network.

    Because $F'$ contains no cycles, there must be at least one sink $t\in T$ without outflow.
    Because in $F$ all vertices in $T$ are connected to the sink with an edge with weight $1$, the inflow of $t$ is $1$.
    Because $F'$ is a valid flow network, there must be a path from some source $s\in S$ to $t$.
    Assume for contradiction that this path visits some other source $s'\in S$.
    Let $s'$ be the last source that this path visits.
    Now, we can pick the path from $s'$ to $t$ instead.
\end{proof}

The algorithm works regardless of if there is a boundary or not, and uses only a single move per square. Hence, if there exists any valid movement schedule, there also exists one that uses only a single move per square.

\section{Unlabeled, Big Squares, Monotonic}\label{sec:unlabeled-monotonic}
In the scenario with unlabeled squares that are bigger than $1\times 1$, we will show that the square reconfiguration problem is NP-hard if each square is allowed to move only once, by a reduction from the \textsc{Hamiltonian Path} problem. This problem is NP-hard even for planar directed graphs that have maximum degree $3$, and when we designate the start and target vertex of the path that each have degree $1$~\cite{itai1982hamilton}. Note that each vertex needs at least one incoming and at least one outgoing edge, otherwise the answer is trivial. A vertex with two incoming and one outgoing edge is called a merge vertex, and a vertex with two outgoing and one incoming edge is called a split vertex.

Given an instance of \textsc{Hamiltonian Path}, we will construct a corresponding instance of the unlabeled square reconfiguration problem. We replace all (split and merge) vertices in the graph by (split and merge) vertex gadgets. We replace the edges by edge gadgets, see Figure~\ref{fig:Hamiltonian}. Furthermore, we have a start and an end gadget, representing the start and end of the Hamiltonian path.

The gadgets are surrounded by \emph{blocker} squares (yellow in Figure~\ref{fig:Hamiltonian}). These squares form a boundary and can never move. The green squares represent edges, and the two purple squares represent the vertex. Because of the way the purple vertex squares are positioned, they cannot reconfigure themselves. Hence, the only way to reconfigure the vertex gadgets, is if one of the green squares moves away and makes space. An example of such a reduction can be seen in Figure~\ref{fig:Hamiltonian-instance}.
% For the merge gadget, this needs to be a green square representing a horizontal edge, for the split gadget, this needs to be a green square representing the vertical edge. In both cases these squares represent incoming edges to the vertex. Then, one purple square can move out of the way, making space for the other purple square to move to one of the targets it covers.

\begin{figure}[t]
    \centering
    \includegraphics{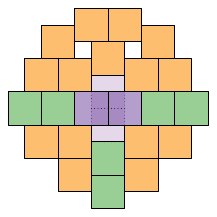}\hfil
    \includegraphics{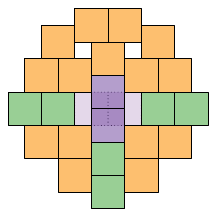}\hfil
    \includegraphics{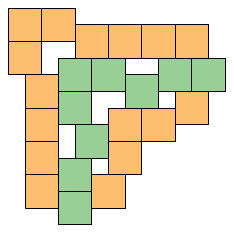}\\[1em]
    \includegraphics{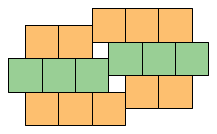}\hfil
    \includegraphics{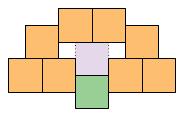}\hfil
    \includegraphics{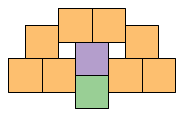}
    \caption{The gadgets for Theorem~\ref{thm:Hamiltonian}. In reading order: the merge, split, corner, wire (with a shift), start, and end gadget. For the merge gadget, the two horizontal edges are incoming. For the split gadget, the two horizontal edges are outgoing. Filled squares are starting locations. The dashed squares are target locations. Both the green as well as the yellow squares cover a start as well as a target square.}
    \label{fig:Hamiltonian}
\end{figure}

\begin{figure}[p]
    \centering
    \includegraphics{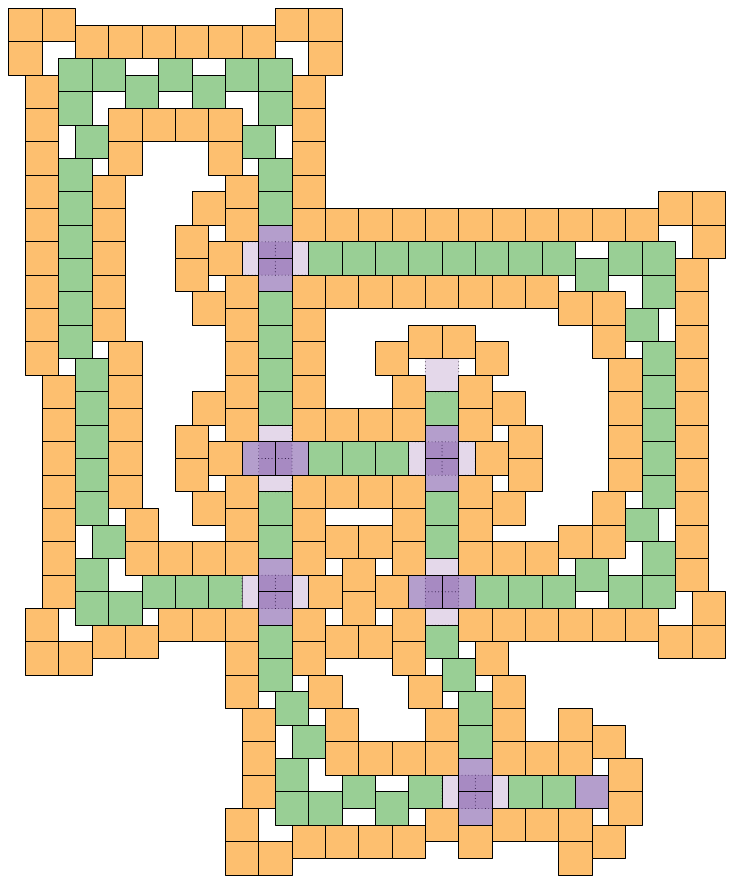}\\[1em]
    \includegraphics{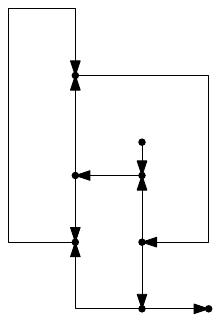}
    \caption{An instance of our Hamiltonian path reduction with a valid movement schedule. Top: The square reconfiguration problem, bottom: The corresponding directed graph.}
    \label{fig:Hamiltonian-instance}
\end{figure}

\begin{theorem}\label{thm:Hamiltonian}
    The unlabeled, bounded monotonic square reconfiguration problem with squares of size $2$ or bigger is NP-hard.
\end{theorem}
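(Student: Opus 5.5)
We reduce from \textsc{Hamiltonian Path} on planar directed graphs of maximum degree $3$ with designated degree-$1$ start and end vertices~\cite{itai1982hamilton}, using the gadgets of Figure~\ref{fig:Hamiltonian}. First we fix a planar orthogonal grid drawing of the graph with polynomial area and place the merge, split, corner, wire, start and end gadgets on it. The blocker squares then form the rigid frame. Each blocker covers both a start and a target location and is fully enclosed, so it cannot move in any valid schedule. We may therefore treat the blockers as boundary, which is allowed in the bounded scenario. The construction is clearly polynomial, and it scales to any square size $\ell\ge 2$ by scaling all coordinates by $\ell/2$.

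For the forward direction, let $v_1,\dots,v_n$ be a Hamiltonian path. We process the gadgets in this order. The start gadget releases the green square of the edge $(v_1,v_2)$, which slides one step along its wire into the target slot of the next gadget. This frees a cell inside the vertex gadget of $v_2$. One purple square can then move into the freed space and the other purple square into its target. The net effect frees exactly the position of the green square of the outgoing edge $(v_2,v_3)$, so that square can move next. Continuing, every vertex gadget is resolved, and the end gadget absorbs the final square. Green squares of edges not on the path must also reach targets. I would argue that each such square, once both of its endpoint gadgets have been processed, sits adjacent to a freed target slot and can move in one move. Alternatively, since every green square covers a start and a target, an unused edge's square can simply stay in place. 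Every square moves at most once, so the schedule is monotone.

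For the converse, assume a valid monotone schedule exists. The key invariants are the following. The purple pair of a vertex gadget cannot reconfigure until some adjacent green square has left, and, by the geometry, only an incoming edge square can free the required space. Each green square moves once, and it can only move into the single slot freed by the vertex at its tail. A counting argument then finishes the proof. Exactly one ``hole'' travels: it is injected at the start gadget and must end at the end gadget. Because the schedule is monotone, the hole cannot revisit a gadget, so the sequence of gadgets it passes through is a directed path. Every vertex gadget has target slots that differ from its start slots, so every vertex gadget must be visited. This yields a Hamiltonian path.

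The main obstacle is the converse, specifically ruling out unintended moves. Squares of size $\ge 2$ could slide along long curves through corridors, two holes could be created, or a purple square could escape into a wire. I would handle this with a local case analysis for each gadget. The aim is to show that at any time the free area inside the frame consists of exactly one pocket of size one square, and that this pocket can only be shifted along gadget connections in edge direction. The unit-size escape of Theorem~\ref{thm:flow} is blocked precisely because the corridors have width smaller than twice the square size, so squares cannot pass each other.
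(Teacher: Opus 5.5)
Your proposal matches the paper's proof: the same reduction from planar directed max-degree-$3$ \textsc{Hamiltonian Path} with degree-$1$ endpoints using the gadgets of Figure~\ref{fig:Hamiltonian}, immovable blockers, a single uncovered target propagating from the start gadget along the edge gadgets, and the two facts that vertex gadgets cannot reconfigure themselves (so every vertex is visited) and that monotonicity forbids revisits (so each vertex is visited once). Two small corrections: of your two options for edges off the path only the second is right, since their green squares already sit on targets and there is exactly one hole, so no spare freed slots exist for them; and scaling by $\ell/2$ gives non-integer coordinates for odd $\ell$, a detail the paper also glosses over.
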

\begin{proof}
    We will show that the instance $M$ of the unlabeled, monotonic square reconfiguration problem with squares of side length at least $2$ admits a valid movement schedule if and only if the corresponding instance $I$ of the \textsc{Hamiltonian path} problem has a Hamiltonian path.

    We will show that at no point, a yellow blocker square can move. No blocker square partially overlaps with a target. Hence, for any blocker to move, another target needs to be completely uncovered. Furthermore, there should be a valid path from a blocker square to this target.

    We will then also show that any valid movement schedule corresponds to a valid Hamiltonian path~$\pi$ in the original graph $G$.
    
    In the initial configuration, only the start gadget has a target position that is not obstructed by a square. Only the green square can reach this target.
    From that point on, in the edge gadgets, each target uncovered by a green square can only be reached by the next green square, until we reach a vertex gadget.
    This edge is the first edge of $\pi$.

    The vertex gadget can be either a split or a merge gadget.
    Regardless of if it is a split gadget or a merge gadget, only one of the purple squares can move into the target last uncovered by a green edge square.

    Then, the other purple square can move into one of the targets of the vertex gadget. For the merge gadget, this needs to be the top target. For the split vertex, there is a choice. If we pick the left target, the right target can only be filled by the green edge square to the right and vice versa. From this point on, we can only move a green edge square into the newly uncovered target.

    This traces a path through our vertex gadgets, which corresponds to a path in the original graph.

    Because the vertex gadgets cannot reconfigure themselves, and because the purple square in the end gadget needs to move, this path visits every vertex gadget. Moreover, because every square can move at most once, this path visits every vertex at most once.
    Hence, this path is a Hamiltonian path.
\end{proof}

Note that this construction works for all unbounded scenarios. Hence, the bounded version is also NP-hard, since we can add a large bounding box without affecting the result.

\section{Bounded, Big Squares, Multiple Moves}
In this section we will show that every scenario of the square reconfiguration problem is NP-hard in which we consider squares of at least $2\times 2$ or bigger, that can make a constant amount of moves $k$ with $k\geq 2$, and when the squares live inside a polygonal domain.

We will first assume that the squares can move only twice. To show that the bounded version of the square reconfiguration problem is NP-hard, we will perform another reduction from \textsc{Planar Monotone 3-SAT}. We replace the variables, clauses, and edges by gadgets as shown in Figure~\ref{fig:rendez}. Every square starts on a target (its own in the labeled case), except for one purple \emph{clause checker square} per clause gadget. The squares need to make space for all clause checker squares to move, and then move back to their starting location.

\begin{figure}[b]
    \centering
    \includegraphics{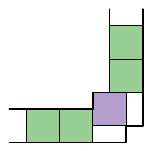}\hfil
    \includegraphics{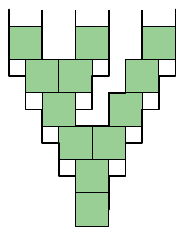}\hfil
    \includegraphics{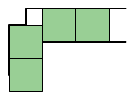}\hfil
    \includegraphics{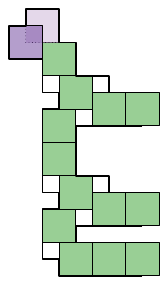}
    \caption{The gadgets for Theorem~\ref{thm:rendez-2move}. From left to right: Variable gadget, splitter gadget, wire gadget with a corner, and clause gadget. All squares have a target below them, except for the clause checker square.}
    \label{fig:rendez}
\end{figure}

\begin{theorem}\label{thm:rendez-2move}
    The bounded, square reconfiguration problem with squares of side length at least $2$ is NP-hard if each square is allowed at most $2$ moves. 
\end{theorem}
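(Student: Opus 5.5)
We reduce from \textsc{Planar Monotone 3-SAT}, using the gadgets of Figure~\ref{fig:rendez}, and show that the instance $M$ admits a valid movement schedule with at most $2$ moves per square if and only if the formula $I$ is satisfiable. The key structural observation is this. Every square except the clause checker squares starts on a target (its own target in the labeled case). Every free target is occupied at the start, except the target of each clause checker. So every non-checker square that moves must use its two moves as ``leave'' and ``return''. We first prove this for the labeled case: a non-checker square that moves away must come back to its own target, so its first move goes to some intermediate free region and its second move goes home. For the unlabeled case we argue that the count of free positions forbids a shifting chain. The only net change is the checkers moving into their targets, and the geometry (hallways of width less than twice the square size, targets aligned under squares) ensures no square can end on a different target without another square ending on a checker's start position, which is not a target.

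For the forward direction we take a satisfying assignment. In each variable gadget we first move the squares on the side corresponding to the assignment into the variable's free pocket. This opens a corridor, and the wire squares along the edges to clauses where the literal is true then each shift into the vacated space, one by one along the wire. Splitter gadgets let a single opened variable side propagate to several wires. At each satisfied clause the checker square now has an open path, so it moves to its target using one move. We then undo all moves in reverse order, so every other square returns with its second move. We will check that each non-checker square moves exactly twice.

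For the converse, we show that any valid schedule yields an assignment. In each variable gadget, the pocket can absorb only the squares of one side (true or false), and both sides cannot be open simultaneously. We also need that once a side has been opened and closed, it cannot be reopened for the other side, since each square then has no moves left. Since every checker must move at some time, and a checker can only move after a chain of wire squares has vacated along a wire from a variable whose corresponding side is open, each clause has a literal that is ``opened''. It remains to argue consistency: a variable cannot open its true side for one clause and its false side for another at different times. This is the main obstacle, because the squares are allowed to return and the same gadget could otherwise be reused in both states sequentially. The plan is to design the variable gadget so that opening either side requires the same central square (or the pocket squares) to spend both of its moves. For example, a central square must leave into the pocket and can only return once, so the gadget can be opened only once in total, and a side opened at time $t$ fixes the assignment. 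We also need that wires cannot be opened ``from the clause side'' or via splitter gadgets in the wrong direction. This follows because wire squares can only move into space vacated by their predecessor, by the narrow-corridor argument used in Theorem~\ref{thm:Hamiltonian}.

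Finally, we note that the construction is independent of the square size $\ell\geq 2$, by scaling all gadgets, and that it works for both labeled and unlabeled variants. The polygonal boundary is essential, since without it the problem becomes trivial.
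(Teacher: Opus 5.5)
Your overall reduction matches the paper's: \textsc{Planar Monotone 3-SAT}, every square except the clause checkers starts on a target, and the forward direction opens one side, lets the wires shift, moves the checkers, and undoes everything. The trouble is the step you yourself call the main obstacle. A variable must not serve its true literal for one clause and its false literal for another, and your proposal leaves this open; the mechanisms you sketch do not close it.

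In the pocket design, once the true-side squares have gone into the pocket and back, they are spent. The false-side squares, however, have not moved at all and can now use the same pocket. Your sentence ``each square then has no moves left'' only applies to squares that already moved.

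Your repair is a central square that leaves into the pocket and returns once. This bounds how \emph{often} the gadget can be opened, not how many sides can be used while it is open. During that one excursion, the true side can make room and close again, and then the false side can do the same.

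The missing idea is that the literal must be encoded in the \emph{destination} of a single square. In the paper's variable gadget, the variable square has a separate true position and false position, and neither is a target. Only its presence at the true (resp.\ false) position lets the corresponding hallways make room. The square must spend its second move getting back onto a target, so it can occupy at most one of these two positions in the whole schedule. That is what fixes the assignment.

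Your treatment of the unlabeled case is also not valid as argued. Counting free positions does not exclude shifting chains. Suppose a square $A$ moves into a checker's target and the checker then moves into $A$'s vacated spot. Then $A$ ends on a different target, yet nobody ends on a checker's start position. So ``every moving non-checker square must leave and return home'' does not follow from your counting. You do not need that claim, though. The argument above only uses that the true and false positions are not targets, so the variable square's second move must land on \emph{some} target. This works verbatim for labeled and unlabeled squares.
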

\begin{proof}
    We will show that the instance $M$ of the square reconfiguration problem with squares of side length at least $2$ has a valid movement schedule if and only if the corresponding instance $I$ of \textsc{Planar Monotone 3-SAT} has a variable assignment that satisfies the formula.

    We first observe that each square is already at the target location, except for the clause checkers. For the clause checker squares to move to (their) target location, the squares in at least one hallway must make place. For this to happen, a corresponding variable gadget needs to move to the corresponding false or true position.

    Each square is allowed to move only twice. Hence, each variable gadget can only go to either the true or to the false position before it needs to move to a target. The only way for a clause checker to move to a target is if a corresponding variable gadget moves to the correct assignment. Since each variable can only perform one of these assignments, if there is no assignment that satisfies the clause, the clause checker will not move. Therefore, if there is no satisfiable assignment, there is also no valid movement schedule.
\end{proof}

Once a square is allowed to move more than twice, the current variable gadget can get multiple assignments.
We will change the variable gadget when a square is allowed to make $k$ moves, with $k > 2$.

To do so, we use a \emph{latch} gadget, see Figure~\ref{fig:rendez-latch}. This gadget can ``waste'' moves. To be more precise: if we want to move $k$ squares through the latch, the latch needs to move at least $k$ times (or $k-1$ if $k$ is even, since two squares can move through the latch at the same time) and if the latch is allowed to move $k$ times, at most $k$ squares can move through the latch (or $k+1$ if $k$ is odd).

% \begin{figure}[t]
%     \centering
%     \includegraphics{rendez-latch}
%     \caption{The latch gadget. Any square moving horizontally through the hallway cannot do so in a singular move.}
%     \label{fig:latch}
% \end{figure}

Using this latch gadget we create a variable gadget that can only have a single assignment, see Figure~\ref{fig:rendez-latch}. To move either the true or false squares inwards, sufficiently many of the squares below need to move through the latch. The latch can only move $k$ times. We design the variable in such a way that exactly the right amount of squares need to move through the latch for either the true or the false square to be able to move. After they have gone back, the latch has moved $k$ (or $k - 2$ depending on the starting location of the latch) times. The latch cannot move enough times to allow the variable to have a different assignment.

\begin{figure}[tp]
    \centering
    \includegraphics[width=\linewidth]{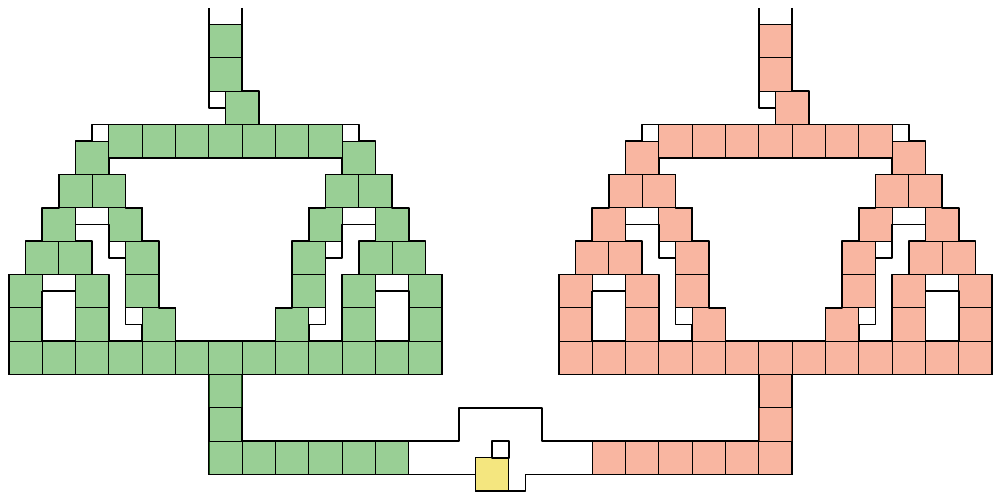}\\[1em]
    \includegraphics[width=\linewidth]{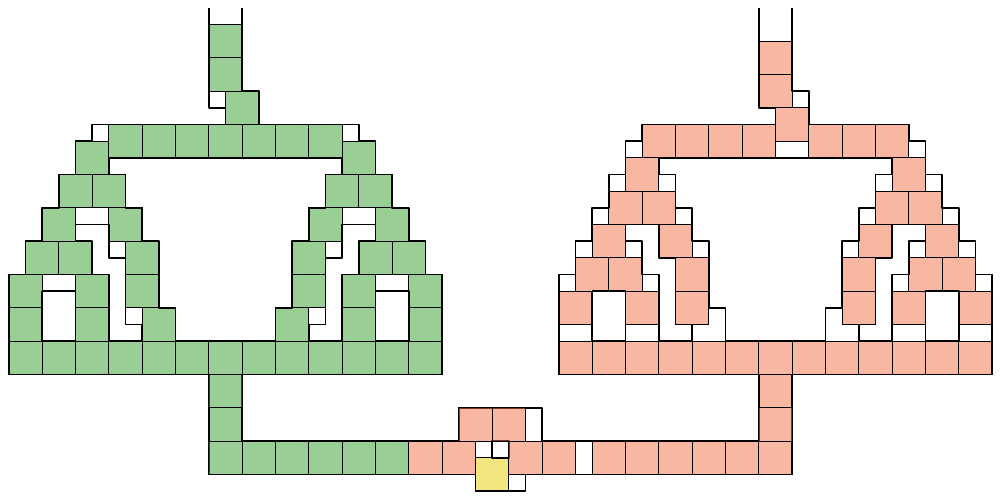}
    \caption{The variable gadget where the yellow latch square can move at most $6$ times. Setting the variable to false (bottom) forces the latch square to move twice. Resetting it takes another two moves. Similarly, setting the variable to true takes three moves, and resetting it takes another three.}
    \label{fig:rendez-latch}
\end{figure}

\begin{theorem}\label{thm:rendez-latch}
    The bounded, square reconfiguration problem with squares of side length at least $2$ is NP-hard if each square is allowed at most $k$ moves, for any constant $k$.
\end{theorem}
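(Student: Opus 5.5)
We reduce from \textsc{Planar Monotone 3-SAT}, reusing the wire, splitter and clause gadgets of Theorem~\ref{thm:rendez-2move} unchanged. The only change is that the variable gadget of Figure~\ref{fig:rendez} is replaced by the latch-based variable gadget of Figure~\ref{fig:rendez-latch}. As before, every square except the clause checkers starts on a target (its own target in the labeled case). Every square must therefore return to its start position, except that each checker ends in its clause target. Since $k\le 2$ is covered by Theorem~\ref{thm:rendez-2move}, we assume $k\ge 3$ and show that the instance $M$ has a valid schedule if and only if the formula $I$ is satisfiable.

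\emph{Satisfiable $\Rightarrow$ schedule.} For each variable we set its gadget to the chosen value. We push the required squares below the latch through the latch, move the true (or false) squares inward, propagate through the wires and splitters exactly as in Theorem~\ref{thm:rendez-2move}, and move each clause checker to its target. Then we undo everything in reverse order. The gadget parameters (the number of squares that must pass the latch for the true side, respectively the false side) are chosen so that one full set-and-reset costs the latch exactly $k$ moves, or $k-1$ by the parity remark. Every other square moves at most twice, so with $k\ge 2$ it stays within budget.

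\emph{Schedule $\Rightarrow$ satisfiable.} The argument has two steps.
\begin{enumerate}
\item We prove the latch property stated before the theorem: if $m$ squares pass through the latch, the latch moves at least $m$ times (at least $m-1$ when $m$ is even), because no square can traverse the latch hallway in a single move while the latch is in place.
\item We count passages per variable. Activating either side requires a prescribed number of passages, and restoring all squares to their targets requires the same number of passages back. With the chosen parameters, one assignment together with its reset already uses $k$ (or $k-1$, $k-2$) latch moves. Activating the other side at any time, before or after the reset, would force strictly more than $k$ latch moves.
\end{enumerate}
Hence in every valid schedule each variable gadget realizes at most one truth value, namely the value its side was pushed to, if any. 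With this in place, the argument of Theorem~\ref{thm:rendez-2move} goes through: each checker can reach its target only through a hallway opened by a literal set to the correct value, so the induced assignment satisfies $I$.

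The main obstacle is step~1 together with choosing the counts so that the inequalities are tight for every $k$, handling the odd and even cases. We also need to rule out cheating moves, for example squares passing the latch simultaneously in pairs, or the squares below the latch being reshuffled into other positions instead of genuinely passing through. I would first prove a local lemma about the latch hallway: any configuration in which a square crosses the latch's column must have the latch displaced into its pocket, and the latch returns between crossings. From this lemma the counting follows by choosing, for example, $\lceil k/2\rceil$ passages for true and $\lfloor k/2\rfloor$ for false, adjusted by one as in Figure~\ref{fig:rendez-latch}, so that the two options together exceed the latch budget.
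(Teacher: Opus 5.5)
Your proposal takes the same route as the paper. The paper likewise reuses the gadgets of Theorem~\ref{thm:rendez-2move}, replaces only the variable gadget with the latch gadget of Figure~\ref{fig:rendez-latch}, and argues only that the latch's $k$ moves pay for setting and resetting one assignment but never for activating the other side.

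Two small fixes are needed:
\begin{itemize}
\item The case $k=1$ is handled by Theorems~\ref{thm:manta} and~\ref{thm:Hamiltonian}, not by Theorem~\ref{thm:rendez-2move}.
\item Your proposed local lemma says the latch is displaced into its pocket for each crossing and returns in between. That would cost about two latch moves per crossing and would rule out paired crossings. It therefore contradicts the count you and the paper rely on in step~1 (roughly one latch move per crossing, with the parity saving), so restate it so that it yields exactly that count.
\end{itemize}
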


\section{Bounded, Labeled \texorpdfstring{$1\times 1$}{1x1}, Multiple Moves}\label{sec:bounded-1x1-multiple-moves}
The last remaining scenario is when we have squares of size $1\times 1$, each with a predetermined target location, moving inside a polygonal domain. We will again use a reduction from the \text{Hamiltonian Path} problem to show NP-hardness. As in Theorem~\ref{thm:Hamiltonian}, an empty square $e$ will trace a path in the configuration. As in Theorem~\ref{thm:rendez-2move}, the squares need to end on the same location as they started. This way, the reduction works for labeled reconfiguration.

We will again show the hardness of this problem via a reduction from the \textsc{Hamiltonian Path} problem on a directed graph with degree at most $3$. Again, as in Theorem~\ref{thm:Hamiltonian}, almost all locations are initially filled with squares, except for the starting vertex, which contains an empty square $e$. Square $e$ will trace a path in the construction that corresponds to a Hamiltonian path in the original graph. However, as in Theorem~\ref{thm:rendez-2move}, most squares start on their own target location. Hence, after $e$ has traced a path through the graph, it needs to go the exact same way back to ensure each square ends where it should.

Given an instance $I$ of the \textsc{Hamiltonian Path} problem, we will construct an instance $M$ of the right square reconfiguration problem as follows. We replace all vertices in the graph by vertex gadgets as shown in Figure~\ref{fig:Hamiltonian-rendez-vertex}, and each directed edge by an edge gadget in the right orientation as shown in Figure~\ref{fig:Hamiltonian-rendez-edge}. Furthermore, we have a start and an end gadget, which is the same as in Theorem~\ref{thm:Hamiltonian}, except with the blocker squares replaced by the bounding polygon.

\begin{figure}
    \centering
    \includegraphics{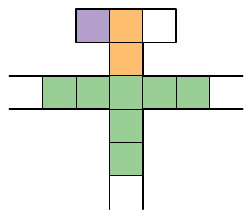}
    \caption{The vertex gadget for the Hamiltonian path construction for labeled squares of size $1$ inside a polygonal domain. Every square starts on its target, except the purple one, which needs to cross from left to right.}
    \label{fig:Hamiltonian-rendez-vertex}
\end{figure}

\begin{figure}[t]
    \centering
    \includegraphics{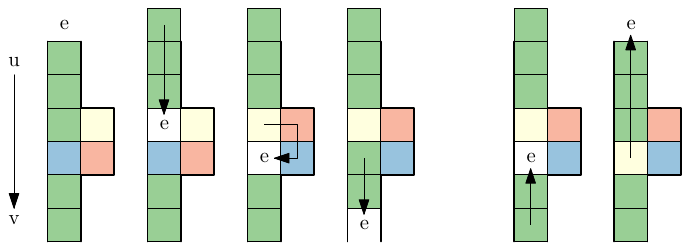}
    \caption{The directed edge gadget representing a directed edge from $u$ to $v$ with the (only) possible reconfiguration sequence. On the left the start configuration, on the right the target configuration. The first 3 steps are when the empty square $e$ traverses the edge downwards, the last two are when the empty square traverses the edge upwards. In between, $e$ can visit $v$, but does not have to.}
    \label{fig:Hamiltonian-rendez-edge}
\end{figure}

\begin{theorem}
    The bounded, labeled, square reconfiguration problem with squares of side length 1 is NP-hard if each square is allowed at most 2 moves.
\end{theorem}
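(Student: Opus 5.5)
We show that the constructed instance $M$ admits a valid movement schedule with at most $2$ moves per square if and only if the directed graph of $I$ has a Hamiltonian path from the designated start vertex to the designated end vertex. Throughout we think in terms of the empty cell $e$. The domain is almost completely filled, so a square can only move into a cell that is currently empty. As in Theorem~\ref{thm:Hamiltonian}, in the initial configuration the only empty cell is in the start gadget.

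\emph{Forward direction.} Given a Hamiltonian path $\pi$, we let $e$ travel along $\pi$. Along each edge $(u,v)$ of $\pi$ we perform the three steps of Figure~\ref{fig:Hamiltonian-rendez-edge} that move $e$ downwards through the edge gadget. At each vertex gadget of Figure~\ref{fig:Hamiltonian-rendez-vertex} we move the purple square one step, so that it is on its way from left to right. Once $e$ reaches the end gadget, its purple square moves. We then retrace $\pi$ in reverse. On each edge we use the last two steps of Figure~\ref{fig:Hamiltonian-rendez-edge}, and in each vertex gadget we complete the purple square's crossing. After this, every square is back on its target and the purple squares are on theirs. We then count moves: each square is displaced once on the way out and once on the way back, so it uses at most $2$ moves. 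The purple squares use their two moves to cross.

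\emph{Backward direction.} The key observation is that, at every point of a valid schedule, essentially only one cell is free. Each move therefore shifts a single square into the current location of $e$, and $e$ traces a walk. We must prove that this walk has the following form: it visits every vertex gadget at most once going forward, along edge gadgets only in their forward direction, and then returns along the same route. First, every purple square must leave its start, so $e$ must enter every vertex gadget. Second, the edge gadget admits a unique reconfiguration sequence, as indicated in Figure~\ref{fig:Hamiltonian-rendez-edge}. Third, a square that has been displaced by $e$ passing in one direction has at most one move left, and that move must return it to its target. Consequently $e$ cannot enter a gadget a second time except on its final retreat along exactly the same cells. Otherwise some square would be displaced away from its target for a third time, or would end off target. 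Together these facts give that the forward part of the walk is a simple path through all vertex gadgets, and its sequence of edges is a directed Hamiltonian path in $I$.

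\emph{Main obstacle.} The hardest step is the local case analysis of the vertex gadget, which has degree $3$ and comes in merge and split variants. We must show that $e$ can enter only via an incoming edge and leave only via an outgoing edge. We must also show that $e$ cannot "turn around" in a vertex or edge gadget to visit a different branch without spending a third move on some square. We would argue this by enumerating, for each gadget, which cells each square can reach within two moves. This confirms that any deviation strands some square away from its target once its move budget is exhausted.
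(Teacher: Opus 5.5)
\textbf{Gap in the forward direction.} You treat the edge gadgets as if they start on their targets, so that only the edges of $\pi$ ever need to be touched. In this construction every edge gadget's start configuration differs from its target configuration (Figure~\ref{fig:Hamiltonian-rendez-edge}, left versus right). Neither edge nor vertex gadgets can reconfigure by themselves, so $e$ must reach \emph{every} edge gadget, including those of edges not on $\pi$. Your forward schedule never enters those gadgets. So your claim that afterwards every square is on its target is false, and the schedule does not end in $T$. The paper fixes this on the way back along $\pi$: at each vertex $u$, $e$ also detours into every non-path edge gadget $(u,v)$ from its tail side $u$. It goes just far enough to reconfigure that gadget and leaves via $u$ again, and each square still moves at most twice.

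\textbf{The same oversight breaks the backward direction.} Your structural lemmas are false for this construction. You claim that $e$ never re-enters a gadget except on a final retreat along the same cells, and that it cannot turn around inside an edge gadget and head into a different branch. But every valid schedule must contain such turn-around visits for the edges it does not cross, so no local case analysis can prove these lemmas. What you need instead is the paper's bookkeeping by \emph{nontrivial} visits:
\begin{itemize}
\item each edge gadget needs at least one nontrivial visit, and by the two-move budget it admits at most two;
\item a visit that exits through a different end than it entered is nontrivial;
\item if the first nontrivial visit of $(u,v)$ comes from $v$, the target configuration is unreachable.
\end{itemize}
The Hamiltonian path is then read off from the first crossing visits of edges, which must go from tail to head. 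Cycles are excluded by three facts together: each edge used to move between vertices must later be crossed back, each edge allows at most two nontrivial visits, and the maximum degree is $3$. The purple squares force every vertex to be visited.

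\textbf{Smaller point.} Each vertex gadget has its own empty cell, namely the initially free target of its purple square (the paper refers to ``the empty spot of the vertex gadgets''). So your single-hole picture, in which every move shifts a square into $e$, also has to account for these local holes.
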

\begin{proof}
    We will show that the instance $M$ of the square reconfiguration problem with squares of side length 1 has a valid movement schedule if and only if the corresponding instance $I$ of the \textsc{Hamiltonian Path} problem has a Hamiltonian path. We first observe that the vertex gadgets, nor the edge gadgets can reconfigure by themselves. Let $e$ be the empty space in the start gadget. In any valid reconfiguration schedule, $e$ needs to visit each vertex gadget, and each edge gadget at least once.

    The edge gadgets are directed from vertex $u$ to vertex $v$. When $e$ visits an edge, we call that visit \emph{nontrivial} if it leaves the configuration different than before the visit.
    Each edge can only be visited nontrivially twice, because each square is allowed to move only twice.
    Each edge needs at least one nontrivial visit.
    When $e$ visits an edge, if it exits the edge in a different direction than it came in, that visit is nontrivial.
    For a directed edge $(u, v)$, if the first visit is from $v$ to $u$, then there is no way to reach the target configuration.
    The edge can only be reconfigured if $e$ visits the edge from $u$ to $v$, and later back from $v$ to $u$, or if $e$ visits the edge from $u$, reconfigures the edge, and leaves via $u$ again.

    If a Hamiltonian path $\pi$ exists, we can construct a valid movement schedule by moving the empty square over the edges of $\pi$. Each time it visits a vertex, it is reconfigured, each time it visits an edge, we perform the reconfiguration as in Figure~\ref{fig:Hamiltonian-rendez-edge}. Once the empty square reaches the target vertex, it moves back on $\pi$. Now, every time it visits a vertex, it also visits all edges neighboring this vertex that were not part of $\pi$, but just far enough to reconfigure the edge gadget. Once the empty square reaches the start again, all squares are at their target position and have moved at most twice. This is a valid movement schedule.

    Now assume that a valid movement schedule exist. We will show that it corresponds to a valid Hamiltonian path $\pi$. The created path $\pi$ will be visiting the same vertices as $e$ in the same order. Moreover, $e$ visits each vertex at least once. We only consider edge visits where $e$ visits the edge for the first time, and exits the edge via a different direction than it entered. As shown before, since the movement schedule is valid, this can only happen when $e$ enters a directed edge $(u, v)$ from $u$. Hence, these visits of edges trace out a valid path in the corresponding graph. Each edge that is used by $e$ to move from one vertex to another must eventually be visited by $e$ in the opposite direction, and each edge can only be visited nontrivially at most twice. Combined with the fact that the degree of each vertex is three, the path traced by concatenating the first such edge visits cannot contain a cycle, but must visit each vertex at least once. Hence, this is a valid Hamiltonian path.
\end{proof}

To generalize this to problem instances where squares are allowed to move at most $k = O(1)$ times, for $k > 2$, we need to adjust the edge gadget accordingly. We extend the loop that is attached to the edge to a length at least $k$, see Figure~\ref{fig:hamiltonian-rendez-edge-4}. The target configuration has all squares in the loop shifted by $k$ positions. To prevent the empty spot of the vertex gadgets from interfering, we make sure that the distance between the vertex gadgets and the loop in each edge gadget is at least $k$. Now $e$ must go around the loop in each edge at least $k$ times for the reconfiguration to work. Again, if the first nontrivial visit is from $v$ to $u$, reconfiguration is impossible. Hence, the proof above still works.

\begin{theorem}\label{thm:Hamiltonian-rendez}
    The bounded, labeled, square reconfiguration problem with squares of side length $1$ is NP-hard if each square is allowed at most $k$ moves, for any constant $k\geq 2$.
\end{theorem}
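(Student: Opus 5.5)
The plan is to reuse the reduction from \textsc{Hamiltonian Path} in the preceding theorem (the case $k=2$). The vertex gadgets, start gadget and end gadget stay as they are. Each edge gadget is replaced by the modified one with a loop of length at least $k$, whose target configuration is the loop shifted by $k$ positions, and which is placed at distance at least $k$ from the vertex gadgets. Given an instance $I$ on a directed graph of maximum degree $3$ with designated start and end vertices of degree $1$, the construction has polynomial size for constant $k$. It then remains to show that $I$ has a Hamiltonian path if and only if the resulting instance $M$ has a valid schedule in which every square moves at most $k$ times.

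\emph{Forward direction.} Assume a Hamiltonian path $\pi$ exists. I copy the schedule from the $k=2$ proof. The empty spot $e$ walks along $\pi$ and performs the downward reconfiguration in each edge of $\pi$. It then walks back along $\pi$, and at each vertex it enters every incident edge that is not on $\pi$ just far enough to complete that edge's reconfiguration. Inside each edge, $e$ goes around the loop exactly $k$ times, which shifts every loop square by $k$ positions. The main bookkeeping is to count moves per square:
\begin{itemize}
\item each loop square moves once per revolution, so $k$ times in total;
\item each square on the connecting corridor of an edge moves at most twice (once as $e$ goes down, once as it comes back);
\item vertex-gadget squares move at most twice, as in the $k=2$ case.
\end{itemize}
Since $k\ge 2$, every square stays within its budget.

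\emph{Backward direction.} I follow the $k=2$ argument but replace ``each edge can be visited nontrivially at most twice'' by a counting argument.
\begin{enumerate}
\item The loop of each edge must be cycled at least $k$ times, and each revolution moves every loop square once. Hence the loop squares use their entire budget, and the loop cannot be cycled again after that.
\item The corridor between the loop and the vertex gadgets has length at least $k$. So the empty spot of a vertex gadget cannot reach the loop without $e$ itself traversing the corridor, and traversals of the corridor are again bounded by the move budget.
\item If the first nontrivial visit enters a directed edge $(u,v)$ from $v$, the loop is driven in the wrong rotational direction. Undoing this would cost extra moves that exceed the budget $k$, so the target configuration cannot be reached.
\end{enumerate}
With these facts, the concluding argument of the $k=2$ proof goes through unchanged. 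The first direction-changing visits to edges trace a directed walk from the start gadget. Each vertex is reconfigured, so the walk visits every vertex. Every traversed edge must later be traversed back, and the number of traversals per edge is bounded, so together with degree at most $3$ the walk cannot revisit a vertex. It is therefore a Hamiltonian path.

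I expect the main obstacle to be item 3 above, together with ruling out ``cheating'' interleavings. An example is $e$ partially cycling a loop, leaving, and returning later, or spending corridor moves so that the squares have leftover budget. To handle this I would define a potential on each edge gadget: the net rotation of the loop plus the number of corridor squares displaced. I would then argue that every move of $e$ within the gadget changes this potential by at most one. Reaching the target requires net rotation exactly $k$ in the correct direction, so with a budget of $k$ per loop square there is no slack for any wrong-direction move. This forces the behaviour to match the intended sequence.
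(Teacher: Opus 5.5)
This is essentially the paper's proof. The paper makes exactly your modification: an edge loop of length at least $k$ whose target is shifted by $k$ positions, placed at distance at least $k$ from the vertex gadgets. It then notes that $e$ must circle each loop at least $k$ times, asserts that a first nontrivial visit from $v$ to $u$ still makes reconfiguration impossible, and concludes that the $k=2$ argument goes through unchanged.

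The paper gives no more justification than you do for the point you flag as the main obstacle, so your potential-function idea would be filling in detail it omits. Be aware, though, that $e$ can circle a loop in either rotational sense regardless of whether it entered from $u$ or from $v$. Your ``wrong rotational direction'' heuristic therefore does not by itself establish item 3; the asymmetry has to come from the layout of the edge gadget, as in the $k=2$ gadget.
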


\begin{figure}[b]
    \centering
    \includegraphics{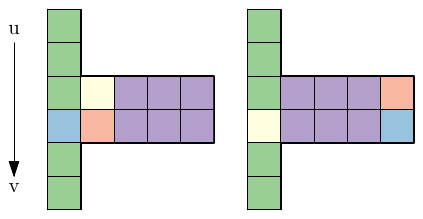}
    \caption{The edge gadget for the reconfiguration problem when each square is allowed to move at most $4$ times. Left: the starting configuration. Right: The target configuration.}
    \label{fig:hamiltonian-rendez-edge-4}
\end{figure}

\newpage

\section{Conclusion}
We have shown that deciding if reconfiguration of squares is possible is NP-hard in most scenarios when squares are only allowed a constanct number of moves. The only exception are unlabeled, unit size squares. The algorithm that we gave is not necessarily optimal; we only considered the decision variant of the problem. An interesting question is if optimal movement schedules for these problems can be calculated. Another open question is if these results can be extended to consider other object types, such as disks. Lastly, our reductions use polygonal domains that have holes. This is a natural result in reductions from graph problems. It is currently unknown if these results still hold in polygonal domains without holes.

\bibliography{thebibliography}

\appendix
\end{document}